\newtheorem{theorem}{Theorem}
\newtheorem{lemma}[theorem]{Lemma}
\newtheorem{corollary}[theorem]{Corollary}
\theoremstyle{definition}
\newtheorem{example}{Example}
\newcommand{\bb}{\mathbf{b}}
\newcommand{\yy}{\mathbf{y}}
\newcommand{\OO}{\mathbf{O}}
\newcommand{\XX}{\mathbf{X}}
\newcommand{\YY}{\mathbf{Y}}
\newcommand{\LW}{\text{\normalfont LW}}
\newcommand{\lpoa}{\text{\normalfont LPoA}}
\newcommand{\lpos}{\text{\normalfont LPoS}}
\newcommand{\calG}{\mathcal{G}}
\newcommand{\calC}{\mathcal{S}}
\newcommand{\eq}{\text{EQ}}
\newcommand{\remove}[1]{}
\title{\bf Simple combinatorial auctions with budget constraints}
\author{Alexandros A. Voudouris \\ School of Computer Science and Electronic Engineering \\ University of Essex}
\date{}
\begin{document}

\maketitle

\begin{abstract}
We study the efficiency of simple combinatorial auctions for the allocation of a set of items to a set of agents, with private subadditive valuation functions and budget constraints. The class we consider includes all auctions that  allocate each item independently to the agent that submits the highest bid for it, and requests a payment that depends on the bids of all agents only for this item. Two well-known examples of this class are the simultaneous first and second price auctions. We focus on the pure equilibria of the induced strategic games, and using the liquid welfare as our efficiency benchmark, we show an upper bound of $2$ on the price of anarchy for any auction in this class, as well as a tight corresponding lower bound on the price of stability for all auctions whose payment rules are convex combinations of the bids. This implies a tight bound of $2$ on the price of stability of the well-known simultaneous first and second price auctions, which are members of the class. Additionally, we show lower bounds for the whole class, for more complex auctions (like VCG), and for settings where the budgets are assumed to be common knowledge rather than private information. 
\end{abstract}

\section{Introduction}\label{sec:intro}
We study the efficiency of auctions for the allocation of a set of discrete items to a set of {\em budget-constrained} agents with combinatorial preferences over the items, expressed via {\em valuation functions}. Such {\em combinatorial auctions} have been studied extensively in settings without budget constraints, through the lens of {\em mechanism design}~\citep{Dobzinski2011impossibility,Dobzinski2016submodular,Feldman2015fixed} and {\em equilibrium analysis}~\citep{Christodoulou2016bayesian,Bhawalkar2011bidding,Feldman2013simultaneous}, under various assumptions about the structure of the valuation functions~\citep{Lehmann2006classes}. 
While designing approximately optimal and truthful auctions is important, such an approach is usually restricted by impossibility results and also leads to auctions with complicated allocation and payment rules. 
In this paper we follow the alternative approach of using much simpler auctions, like the {\em simultaneous first and second price auctions}, where every agent submits a scalar bid per item, each item is allocated to the agent that submits the highest bid for it, and the winner of every item either pays her own bid (first price) or the second highest bid (second price). However, such auctions are non-truthful and naturally induce strategic games among the agents who typically act as utility-maximizing players and strategize over their bids. The objective then is to quantify the efficiency loss at equilibrium due to the strategic behavior of the agents, by bounding the {\em price of anarchy} \citep{KP99} and the {\em price of stability} \citep{Anshelevich2008pos} in terms of some social efficiency benchmark. 

Most of the related literature on auctions makes two main assumptions about the characteristics of the agents and the efficiency benchmark. First, the agents are assumed to have {\em no budget constraints}, and are therefore able to afford any payment as long as they get non-negative utility. However, such an assumption is clearly non-realistic. Even though an agent might have high value for a set of items, she may not have the necessary liquidity to actually pay for them. Therefore, it is more natural to instead consider settings where the agents may have budget constraints that limit their ability to pay. Second, social efficiency is measured by the {\em social welfare} benchmark, defined as the total value of the agents for the items they are allocated. While this might be true if we assume that the agents do not have budget constraints, when budgets do exist, it is not hard to see that the social welfare achieved at equilibrium can be arbitrarily far away from the maximum possible; consider the following example with just two agents and one item.

\begin{example}\label{ex:social}
Assume that there is one item, and two agents with values $(\lambda > 2, 2)$ and budgets $(1,2)$. Since the second agent has more budget than the first agent ($2$ vs. $1$), she can always outbid the first agent and get the item. This leads to an allocation with social welfare $2$. However, the optimal allocation is to give the item to the first agent for a social welfare of $\lambda$. Therefore, there is a huge gap between the welfare at equilibrium and that of the optimal allocation if $\lambda$ is large.
\qed
\end{example}

\noindent 
Therefore, the social welfare is clearly not the correct benchmark in settings with budget constraints. Instead, we use the {\em liquid welfare} benchmark, which takes the budgets of the agents into account. Similarly to the social welfare, the liquid welfare of an allocation is defined as the total value of all agents, but with the value of each agent capped off by her budget (i.e., we take the minimum between the agent's value and her budget). Observe that for this benchmark the equilibrium allocation of Example~\ref{ex:social} is now the optimal one. 
The liquid welfare benchmark was first proposed independently by \citet{Dobzinski2014liquid} and \citet{ST13}, who called it {\em effective welfare}. Dobzinski and Paes Leme showed that the well-known VCG auction~\citep{V61,C71,G73} is no longer truthful, and designed truthful mechanisms that approximate the optimal liquid welfare for settings with divisible items. On the other hand, Syrgkanis and Tardos compared the social welfare at equilibrium to the optimal liquid welfare, for games induced by a plethora of auctions, ranging from single-item and multi-unit auctions to combinatorial ones.

The price of anarchy in terms of the liquid welfare, termed {\em liquid price of anarchy}, has been studied in a series of papers by now. To start with, the papers by \citet{CV16} and \citet{Christodoulou2016proportional} focused on the proportional allocation mechanism, used for the allocation of divisible resources, and were the first to formally bound the liquid price of anarchy. Following their work, \citet{CV18} characterized the structure of worst-case pure equilibria and proved tight bounds on the liquid price of anarchy for almost all divisible resource allocation mechanisms. In a similar spirit, \citet{V19} showed tight bounds on the pure liquid price of anarchy and stability for ad auctions, including the generalized second price auction and VCG. Our work is mostly related to that of \citet{Azar2017lpoa} who showed a constant bound on the liquid price of anarchy of simultaneous first and second price auctions for Bayes-Nash equilibria when the agents have additive valuation functions, and a tight bound of $2$ on the liquid price of anarchy for pure equilibria when the agents have fractionally-subadditive valuations. Among other results, in this paper we extend the second main result of \citet{Azar2017lpoa} to the class of subadditive valuation functions, which contains that of fractionally-subadditive, and further show that the bound of $2$ in fact holds for the liquid price of stability, and for many more simple combinatorial auctions.

\subsection{Our contribution}
We consider a combinatorial setting, where a set of $m$ discrete items are to be allocated to a set of $n$ budget-constrained agents with {\em subadditive} valuation functions. For the allocation of the items, we consider simple combinatorial auctions, which work as follows: every agent submits a bid per item, receives every item for which she submits the highest bid, and also pays an amount that is a function of the bids of all agents for this item only; such auctions were considered before by \citet{Bhawalkar2012simultaneous} in settings without budget constraints. We focus on the {\em pure} equilibria of the strategic games induced by such auctions, and study their social efficiency, by bounding the liquid price of anarchy and stability; that is, we bound the worst-case ratio between the liquid welfare achieved by any possible allocation and the liquid welfare in the worst and the best equilibrium, respectively.

By carefully exploiting techniques developed by \citet{Bhawalkar2011bidding} and \citet{Feldman2013simultaneous}, we show an upper bound of $2$ on the liquid price of anarchy of all simple combinatorial auctions whose payment rules satisfy some mild assumptions; see Section~\ref{sec:upper}. We complement this upper bound with a corresponding lower bound of $2$ on the liquid price of stability for auctions whose payment rules are convex combinations of the bids, and a lower bound of $2-1/n-(n-1)/m$ for all simple auctions with $n$ players and $m$ items. As a corollary, we obtain that the liquid price of stability of simultaneous first and second price auctions with subadditive agents is exactly $2$, thus extending the corresponding bound of \citet{Azar2017lpoa} for fractionally-subadditive agents; these results can be found in Section~\ref{sec:lower}.
We also consider more complex auctions, which take into account the bids for all items when deciding the allocation and the payments, and show that even the most prominent such auction, VCG, has liquid price of anarchy at least $2$; see Section~\ref{sec:vcg}. Finally, in Section~\ref{sec:known} we consider the case where the budgets are common knowledge, instead of private information. While one would expect that this extra power could lead to an almost fully efficient simple auction, we actually show that this is not true, by presenting a lower bound of $4/3 - 2/(3m)$ on the liquid price of anarchy of any simple auction. We conclude the paper by discussing possible directions for future research in Section~\ref{sec:open}. 

\subsection{Other related work}
The related literature on auction theory is extensive. We refer the interested reader to the survey of \citet{survey} and the chapter of \citet{combinatorial-chapter} for a broad introduction to combinatorial auctions, as well as to the survey of \citet{poa-survey} for an overview of the work on the price of anarchy in auctions.

There has been a lot of recent work on auctions with budget-constrained agents. \citet{LX15,LX17} extended and also generalized the work of \citet{Dobzinski2014liquid}. \citet{Fotakis2019bridge} showed how truthful auctions which approximate the social welfare in submodular valuation settings without budget constraints can be adapted to settings with budget constraints, so that they remain truthful and also now achieve almost the same approximation for the liquid welfare. In a slightly different context, \citet{Branzei2017truthful} viewed the liquid welfare as an upper bound on the maximum possible revenue and designed truthful mechanisms for revenue maximization, while \citet{Branzei2019dynamics} studied similar questions for the best-response dynamics in games induced by envy-free pricing mechanisms. The liquid welfare has also been considered in other settings with budget constraints, including lottery pricing equilibria~\citep{Dughmi2016lottery}, online multi-unit auctions~\citep{Eden2019online}, and preferred deals for impression sales~\citep{Deng2019deals}.

\section{Preliminaries}\label{sec:prelim}
There are $m$ {\em items} to be allocated to $n$ {\em players}. Each player $i$ has a {\em valuation function} $v_i:2^m \rightarrow \mathbb{R}_{\geq 0}$ and a private {\em budget} $c_i \in \mathbb{R}_{\geq 0}$. The valuation function returns the value of player $i$ for every possible subset (or, {\em bundle}) of items\footnote{The players are assumed to have access to their valuation functions via demand queries; such a query takes as input a bundle of items and returns the value of the player for this bundle.} that can be given to her, while the budget restricts the payments that she can afford. To shorten our notation in the case of singleton bundles, we will write $v_i(j)$ instead of $v_i(\{j\})$. 
We assume that the valuation functions are monotone and subadditive. 
Let $S$ and $T$ be two bundles of items.
A valuation function $v_i$ is 
\begin{itemize}
\item {\em monotone} if $S \subseteq T$ implies that $v_i(S) \leq v_i(T)$, 
\item {\em additive} if $v_i(S \cup T) = v_i(S) + v_i(T)$, 
\item {\em fractionally-subadditive} if there exists a family of additive functions $\{\nu_1, ..., \nu_k\}$ such that $v_i(S) = \max_{\ell}\{\nu_\ell(S)\}$, and
\item {\em subadditive} if $v_i(S \cup T) \leq v_i(S) + v_i(T)$.
\end{itemize}
The class of additive valuation functions is a subset of the class of fractionally-subadditive functions, which in turn is a subset of the class of subadditive functions. Therefore, in this paper we consider the most general class of structured valuation functions.

We consider a class of simple combinatorial auctions $\calC$ for the allocation of the items to the players. Every player $i$ submits a vector of bids $\bb_i = (b_{ij})_{j \in [m]}$ consisting of a bid $b_{ij} \in \mathbb{R}_{\geq 0}$ per item $j \in [m]$; let $\bb = (\bb_i)_{i \in [n]}$ be the matrix consisting of the bids of all players for all items, and let $\bb_{[j]}= (b_{ij})_{i \in [n]}$ be the vector consisting of the bids of all players only for item $j$. Every auction in $\calC$ allocates independently each item $j$ to the player $\pi(j)$ that submits the highest bid for it ($b_{\pi(j),j} \geq b_{ij}$ for every $i \in [n]$), and also requests a payment of $p_j(\bb_{[j]}) \in [0, b_{\pi(j),j}]$ from player $\pi(j)$. \footnote{In case there are two or more players with the highest bid for an item, then such an item is arbitrarily given to one of these players; this captures the case where all bids for an item are equal to zero.} We assume that the payment function is non-decreasing and continuous in every bid.

\begin{example}
The two most prominent members of $\calC$ are the {\em simultaneous first and second price auctions} (SFPA and SSPA). 
In SFPA the winner of each item has to pay her own bid, while in SSPA the winner has to pay the second highest bid for the respective item. Both of these auctions clearly satisfy our assumptions about the payment functions: the payment is at most the bid of the winner, and increasing the bids cannot decrease the price. In fact, any auction that defines the payment to be a convex combination of the bids is a member of $\calC$.
\qed
\end{example}

Let $\XX(\bb)$ denote the {\em allocation} that is the outcome of the auction when given as input the bid matrix $\bb$, according to which player $i$ gets a bundle of items $X_i$ such that $\cup_i X_i = [m]$. The {\em utility} of player $i$ for the outcome of the auction is then defined as 
$$u_i(\bb) = v_i(X_i) - \sum_{j \in X_i}p_j(\bb_{[j]})$$
if her total payment does not exceed $c_i$, or $-\infty$ otherwise.

The players are utility-maximizers and strategically select the bids they submit to the auction. This strategic behavior defines a game $\calG$ among the players. We say that a bid matrix $\bb$ is a {\em pure Nash equilibrium} (or, simply, {\em equilibrium}) if no player has incentive to deviate to a different bid strategy, that is,  
$$u_i(\bb) \geq u_i(\yy,\bb_{-i}),$$ for every player $i$ and bid vector $\yy \neq \bb_i$. We use the notation $(\yy,\bb_{-i})$ to denote the matrix that is obtained after replacing the entries of $\bb$ corresponding to $\bb_i$ by $\yy$. Let $\eq(\calG)$ be the set of all equilibrium bid matrices of game $\calG$. 

The {\em liquid welfare} of an allocation $\XX$ is the total value of all players, such that the value of each player is capped off by her budget, that is, 
$$\LW(\XX) = \sum_{i \in [n]} \min\{v_i(X_i),c_i\}.$$
The {\em pure liquid price of anarchy} of an auction is the worst-case ratio (over all possible games induced by the auction) between the optimal liquid welfare achieved by any allocation and the {\em minimum} liquid welfare at equilibrium:
\begin{align*}
\lpoa &=  \sup_\calG \frac{\max_{\YY} \LW(\YY)}{\min_{\bb \in \eq(\calG)} \LW(\XX(\bb))}.
\end{align*}
Similarly, the {\em pure liquid price of stability} of an auction is the worst-case ratio (over all possible games induced by the auction) between the optimal liquid welfare achieved by any allocation and the {\em maximum} liquid welfare at equilibrium:
\begin{align*}
\lpos &=  \sup_\calG \frac{\max_{\YY} \LW(\YY)}{\max_{\bb \in \eq(\calG)} \LW(\XX(\bb))}. 
\end{align*}
Clearly, $\lpoa \geq \lpos \geq 1$, and hence any upper bound on the $\lpoa$ is also an upper bound on the $\lpos$, while any lower bound on the $\lpos$ is also a lower bound on the $\lpoa$. Furthermore, observe that an upper bound on the $\lpoa$ for the class of subadditive valuation functions is also an upper bound on the $\lpoa$  for any hierarchically lower class of valuation functions (additive and fractionally-subadditive), while a lower bound on the $\lpoa$ for the class of additive valuation functions is also a lower bound on the $\lpoa$ for any hierarchically higher class of valuation functions (fractionally-subadditive and subadditive). These hold for the $\lpos$ as well.

We finally assume that the players are {\em rational} and behave {\em conservatively} to avoid getting negative utility: the sum of bids they submit for any bundle of items does not exceed their value for this bundle nor their budget, that is, 
$$\sum_{j \in X_i} b_{ij} \leq \min\{v_i(X_i),c_i\}.$$ 
Besides being a natural assumption that has been extensively considered in the related literature, this restriction is also necessary in order to have meaningful bounds on the liquid price of anarchy. 

\begin{example}
In SSPA, it might happen that players with very small value or budget submit bids that are extremely large, and end up being the high bidders for some items. This may further lead to an equilibrium where the other players are discouraged enough to submit bids of zero for these items (even though they may have high value for them, as well as budget), leading to a payment of $0$, but also to an allocation with very low liquid welfare. 
Observe that such situations cannot appear in SFPA since the players have to pay their own bids in case they win, and are therefore conservative by definition. \qed
\end{example}

\section{Upper bounds for auctions in $\calC$}\label{sec:upper}
Here, we will show an upper bound of $2$ on the liquid price of anarchy of every auction in $\calC$. 
We start by focusing only on one arbitrary player $i$ and a given set of items $S$. By considering the unilateral deviation of $i$ to a particularly defined bid strategy over $S$, we prove two different bounds on the overall contribution of $i$ to the liquid welfare of the allocation that assigns $S$ to $i$, when the strategy leads to a total payment within $i$'s budget and when the payment exceeds the budget. These bounds will be used later to show the upper bound of $2$ (Theorem~\ref{thm:upper}), by partitioning the players into two sets depending on their contribution to the liquid welfare of the equilibrium. The proof of the next lemma carefully exploits ideas developed by \citet{Bhawalkar2011bidding} and \citet{Feldman2013simultaneous} to bound the price of anarchy for the case of SFPA and SSPA without budget constraints. 

\begin{lemma}\label{lem:deviation}
Consider a player $i$ with a subadditive valuation function $v_i$ and budget $c_i$. Let $S$ be a set of items, and $\bb_{-i}$ be the bid matrix containing the bids of the other players $\ell \neq i$ for all items. Then, there exists a bid vector $\yy_i = (y_{ij})_{j \in [m]}$ such that the utility of player $i$ when she unilaterally deviates to $\yy_i$ is 
$$u_i(\yy_i,\bb_{-i}) \geq \min\{v_i(S),c_i\} - \sum_{j \in S}\max_{\ell \neq i}b_{\ell j},$$
if her total payment does not exceed $c_i$.
Otherwise,
$$\min\{v_i(S),c_i\} < \sum_{j \in S} \max_{\ell \neq i}b_{\ell j}.$$
\end{lemma}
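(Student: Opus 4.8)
The plan is to exhibit a single deviation for player $i$---namely, bidding just enough to win all of $S$ at the ``going prices'' set by the other players---and then split into two cases according to whether this deviation's total payment fits within the budget $c_i$. Write $P_j := \max_{\ell \neq i} b_{\ell j}$ for the highest competing bid on item $j$, and let $\yy_i$ place a bid of $P_j$ on each $j \in S$ and $0$ on every item outside $S$. With this choice player $i$ is a highest bidder on every item of $S$, so (breaking ties in her favour, or equivalently bidding $P_j+\epsilon$ and letting $\epsilon \to 0$) she wins a bundle containing $S$ and, by monotonicity, obtains value at least $v_i(S)$. The only properties of the payment rule I need are the two already granted: $p_j(\bb_{[j]}) \le b_{\pi(j),j}$, which forces her payment on each won item $j$ to be at most her own bid $P_j$ (the items outside $S$ that she might win carry bid $0$, hence payment $0$); and continuity/monotonicity, which I use to pass to the limit $\epsilon \to 0$ so that the per-item payment is at most $P_j$ exactly. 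Hence the total payment of $\yy_i$ is at most $\sum_{j \in S} P_j$.

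Now I case on the budget. If the total payment does not exceed $c_i$, then the utility is well-defined and equals value minus payment, so combining value $\ge v_i(S)$ with payment $\le \sum_{j \in S} P_j$ gives
\[
u_i(\yy_i, \bb_{-i}) \;\ge\; v_i(S) - \sum_{j \in S} P_j \;\ge\; \min\{v_i(S), c_i\} - \sum_{j \in S} P_j,
\]
where the final step only uses $v_i(S) \ge \min\{v_i(S), c_i\}$; this is exactly the first claimed bound. Otherwise the total payment of $\yy_i$ exceeds $c_i$; but since that payment is at most $\sum_{j \in S} P_j$, we immediately get $\sum_{j \in S} P_j > c_i \ge \min\{v_i(S), c_i\}$, i.e. $\min\{v_i(S), c_i\} < \sum_{j \in S} P_j$, which is the second conclusion. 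Note that in this second regime I never need to analyse a partial win: the lemma only asserts the price inequality, so the (otherwise delicate) question of how much value an affordable subset of $S$ retains is sidestepped entirely.

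The substantive point---and essentially the only place where care is genuinely needed---is the limiting/tie-breaking step that lets me charge each won item at most $P_j$ rather than $P_j+\epsilon$; this is where continuity of the payment rule is invoked, following the bidding-to-win-a-target-set device of \citet{Bhawalkar2011bidding} and \citet{Feldman2013simultaneous}. Subadditivity itself plays only a supporting role in this lemma: because the deviation captures all of $S$, the value secured is $v_i(S)$ outright and no additive decomposition of $v_i$ is required; its real use is deferred to \Cref{thm:upper}, where the two regimes isolated here are aggregated across the players to yield the factor of $2$. I therefore expect the main obstacle to be purely the bookkeeping of the $\epsilon \to 0$ passage under the stated continuity assumption, rather than any combinatorial difficulty.
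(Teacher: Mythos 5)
Your case analysis and the payment/tie-breaking bookkeeping are fine, but the deviation you propose is in general not an admissible strategy in this model, and that is where the proof breaks. The paper restricts players to \emph{conservative} strategies: the bids submitted must satisfy $\sum_{j \in Q} b_{ij} \leq \min\{v_i(Q), c_i\}$ for the bundles they may win. This restriction applies to deviations as well as to equilibrium play --- the paper's own proof goes out of its way to verify that its deviating vector is conservative, precisely because Theorem~\ref{thm:upper} invokes the equilibrium condition $u_i(\bb) \geq u_i(\yy_i,\bb_{-i})$, which is only available for deviations inside the (restricted) strategy space. Your vector bids $P_j = \max_{\ell \neq i} b_{\ell j}$ on \emph{every} item of $S$, and $\sum_{j \in S} P_j$ can vastly exceed both $v_i(S)$ and $c_i$ (this is exactly the interesting regime); such a bid vector is not conservative and hence not a legal deviation.

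The fix is the actual content of the paper's proof, and it is not mere bookkeeping: one takes a \emph{maximal} subset $T \subseteq S$ with $v_i(T) \leq \sum_{j \in T} P_j$ and bids $P_j$ only on $S \setminus T$. Maximality of $T$ together with subadditivity shows that $\sum_{j \in Q} y_{ij} \leq v_i(Q)$ for every $Q \subseteq S \setminus T$, i.e.\ the deviation is conservative; and subadditivity is used a second time, via $v_i(S \setminus T) + v_i(T) \geq v_i(S)$ and $v_i(T) - \sum_{j \in T} P_j \leq 0$, to recover the bound $u_i \geq v_i(S) - \sum_{j \in S} P_j$ even though the player only contests $S \setminus T$. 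So your remark that subadditivity ``plays only a supporting role'' in this lemma is a symptom of the gap: it is exactly what makes a legal deviation achieve the stated guarantee. Without it the lemma as used in Theorem~\ref{thm:upper} would fail for non-subadditive valuations, whereas your argument never uses the hypothesis at all.
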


\begin{proof}
Let $T$ be a maximal subset of $S$ such that $v_i(T) \leq \sum_{j \in T} \max_{\ell \neq i} b_{\ell j}$. We define the deviating bid vector $\yy_i = (y_{ij})_{j \in [m]}$ to be such that
\begin{align*}
y_{ij} = 
\begin{cases}
\max_{\ell \neq i}b_{\ell j} + \delta, & \text{if } j \in S \setminus T \\
0, & \text{otherwise},
\end{cases}
\end{align*}
for any $\delta > 0$. Since the payment function of the auction is continuous, player $i$ definitely wins every item $j \in S \setminus T$, even for values of $\delta$ that are infinitesimally close to zero. Hence, in the rest of the proof, without loss of generality, we set $\delta=0$. 

Before we continue, we argue that for any set $Q \subseteq S\setminus T$, $\sum_{j \in Q} y_{ij} \leq v_i(Q)$, and hence $\yy_i$ is a conservative strategy for player $i$ in terms of her valuation function. Assume otherwise that there exists a non-empty set $Q^* \subseteq S\setminus T$ with $\sum_{j \in Q^*}y_{ij} > v(Q^*)$. Then, by the subadditivity of the valuation function, the definition of $\yy_i$ and the definition of $T$, we have that
\begin{align*}
v_i(Q^* \cup T) &\leq v_i(Q^*) + v_i(T) \leq \sum_{j \in Q^* \cup T} \max_{\ell \neq i}b_{\ell j}, 
\end{align*}
which contradicts the maximality of $T$. 

Let $\yy=(\yy_i,\bb_{-i})$, and denote by $\YY=\YY(\yy)$ the allocation after the deviation of player $i$ to the bid vector $\yy_i$. 
Since $p_j(\yy_{[j]}) \leq y_{ij}$ for every item $j \in Y_i$, $y_{ij} = 0$ for every $j \in Y_i \setminus (S\setminus T)$, and $y_{ij} = \max_{\ell \neq i}b_{\ell j}$ for every $j \in S\setminus T$, we can bound the total payment of player $i$ as follows:
\begin{align}\label{eq:payment}
 \sum_{j \in Y_i} p_j(\yy_{[j]}) \leq \sum_{j \in Y_i} y_{ij}  = \sum_{j \in S \setminus T} \max_{\ell \neq i}b_{\ell j}.
\end{align}

Now, we distinguish between the following two cases. We first assume that the total payment of player $i$ is within her budget $c_i$.
The utility of player $i$ is 
\begin{align*}
u_i(\yy) = v_i(Y_i) - \sum_{j \in Y_i} p_j(\yy_{[j]}).
\end{align*}
Since $Y_i \supseteq S \setminus T$, by the monotonicity of $v_i$, we have that $v_i(Y_i) \geq v_i(S \setminus T)$. 
Combining this with inequality~\eqref{eq:payment}, we obtain
\begin{align*}
u_i(\yy) \geq v_i(S \setminus T) - \sum_{j \in S\setminus T} \max_{\ell \neq i} b_{\ell j}.
\end{align*}
By the definition of $T$, we have $v_i(T) - \sum_{j \in T} \max_{\ell \neq i} b_{\ell j} \leq 0$, and hence
\begin{align*}
u_i(\yy) &\geq v_i(S \setminus T) - \sum_{j \in S \setminus T} \max_{\ell \neq i} b_{\ell j}
+ v_i(T) - \sum_{j \in T} \max_{\ell \neq i} b_{\ell j} \\
&= v_i(S \setminus T) + v_i(T) - \sum_{j \in S} \max_{\ell \neq i} b_{\ell j}.
\end{align*}
By the subadditivity of $v_i$, we further have that
\begin{align*}
v_i(S \setminus T) + v_i(T) \geq v_i( (S\setminus T)\cup T) = v_i(S).
\end{align*}
By also using the simple fact that $\alpha \geq \min\{\alpha,\beta\}$ for every $\alpha,\beta$, we finally obtain the desired inequality:
\begin{align*}
u_i(\yy) &\geq v_i(S) - \sum_{j \in S} \max_{\ell \neq i} b_{\ell j}\\
&\geq \min\{v_i(S),c_i\} - \sum_{j \in S}\max_{\ell \neq i}b_{\ell j}.
\end{align*}

We now switch to the second case where the total payment of player $i$ exceeds her budget. Then, by the fact that $\alpha \geq \min\{\alpha,\beta\}$ for every $\alpha,\beta$ and inequality~\eqref{eq:payment}, we have that
\begin{align*}
\min\{v_i(S),c_i\} &\leq c_i < \sum_{j \in Y_i} p_j(\yy_{[j]}) \leq \sum_{j \in S \setminus T} \max_{\ell \neq i}b_{\ell j} \leq \sum_{j \in S} \max_{\ell \neq i}b_{\ell j}.
\end{align*}
This completes the proof of the lemma. 
\end{proof}

Next, we show the main result of this section. The proof idea is to partition the liquid welfare at equilibrium into two quantities, one for the players that contribute their budget and one for the players that contribute their value, and bound each of them separately, using Lemma~\ref{lem:deviation} and the equilibrium condition.

\begin{theorem}\label{thm:upper}
The pure liquid price of anarchy of any auction in $\calC$ with subadditive valuation functions is at most $2$.
\end{theorem}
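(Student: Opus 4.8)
The plan is to fix an arbitrary equilibrium bid matrix $\bb \in \eq(\calG)$ with induced allocation $\XX = \XX(\bb)$, together with a liquid-welfare-optimal allocation $\YY$ assigning bundle $Y_i$ to each player $i$, and to prove directly that $\LW(\YY) \le 2\,\LW(\XX)$; taking the supremum over all games then yields the claim. For each player $i$ I would invoke Lemma~\ref{lem:deviation} with the set $S = Y_i$, obtaining a deviation $\yy_i$, and combine the two cases of that lemma into a single per-player inequality.

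First I would establish, for \emph{every} player $i$, the unified bound
\begin{align*}
\min\{v_i(Y_i),c_i\} \le u_i(\bb) + \sum_{j \in Y_i}\max_{\ell \neq i} b_{\ell j}.
\end{align*}
In the case where the deviation $\yy_i$ stays within budget, Lemma~\ref{lem:deviation} gives $u_i(\yy_i,\bb_{-i}) \ge \min\{v_i(Y_i),c_i\} - \sum_{j\in Y_i}\max_{\ell\neq i}b_{\ell j}$, and the equilibrium condition $u_i(\bb) \ge u_i(\yy_i,\bb_{-i})$ finishes it. In the complementary case the lemma directly gives $\min\{v_i(Y_i),c_i\} < \sum_{j\in Y_i}\max_{\ell\neq i}b_{\ell j}$, and since a player can always guarantee nonnegative utility at equilibrium (e.g.\ by bidding zero everywhere), the term $u_i(\bb) \ge 0$ absorbs the gap.

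The crux is the partition. I would split the players according to their contribution to $\LW(\XX)$: let $P_1 = \{i : v_i(X_i) \ge c_i\}$ (who contribute their budget) and $P_2 = \{i : v_i(X_i) < c_i\}$ (who contribute their value). For $i \in P_1$ I use only the trivial bound $\min\{v_i(Y_i),c_i\} \le c_i = \min\{v_i(X_i),c_i\}$, never touching the utility. For $i \in P_2$ I use the unified inequality above together with $u_i(\bb) \le v_i(X_i) = \min\{v_i(X_i),c_i\}$, which holds because payments are nonnegative and $v_i(X_i) < c_i$. Summing the two groups and using nonnegativity of bids to extend the residual double sum from $P_2$ to all players gives
\begin{align*}
\LW(\YY) \le \LW(\XX) + \sum_{i \in [n]}\sum_{j \in Y_i}\max_{\ell \neq i} b_{\ell j}.
\end{align*}
Since $\{Y_i\}_i$ partitions the items and $\max_{\ell\neq i}b_{\ell j} \le \max_{\ell}b_{\ell j} = b_{\pi(j),j}$, the remaining double sum is at most $\sum_j b_{\pi(j),j} = \sum_i \sum_{j\in X_i} b_{ij}$, which by the conservativeness assumption is at most $\sum_i \min\{v_i(X_i),c_i\} = \LW(\XX)$. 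Adding the two copies of $\LW(\XX)$ yields $\LW(\YY) \le 2\,\LW(\XX)$.

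The step I expect to be the main obstacle — and the reason the partition is essential — is controlling the utility term. A naive summation $\sum_i u_i(\bb)$ fails badly: total utility tracks the \emph{social} welfare, which (as Example~\ref{ex:social} shows) can be arbitrarily larger than the liquid welfare, precisely because of players whose value vastly exceeds their budget yet who pay very little. Routing exactly those players ($P_1$) through the budget bound, and only the value-contributing players ($P_2$) through the utility-based deviation bound, is what keeps every term comparable to $\LW(\XX)$.
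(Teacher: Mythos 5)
Your proposal is correct and follows essentially the same route as the paper's own proof: the same deviation lemma applied with $S$ equal to the player's optimal bundle, the same partition of players into budget-contributors and value-contributors (your $P_1$/$P_2$ is the paper's $[n]\setminus V$ and $V$), the same use of $u_i(\bb)\ge 0$ to absorb the over-budget case, and the same conservativeness argument to bound the residual double sum by $\LW(\XX)$. No gaps; the argument is sound as written.
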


\begin{proof}
Consider an arbitrary instance with $n$ players and $m$ items, where player $i$ has a subadditive valuation function $v_i$ and a budget $c_i$. Let $\bb$ be an equilibrium bid matrix that induces the allocation $\XX = \XX(\bb)$ according to which player $i$ gets a set of items $X_i$. We also denote by $O_i$ the set given to player $i$ in an optimal allocation $\OO$.

Let $V = \{i : v_i(\XX) \leq c_i \}$ be the set of players that contribute their value to the liquid welfare at equilibrium. We can write the liquid welfare at equilibrium as 
\begin{align}\nonumber
\LW(\XX) &= \sum_{i \in [n]} \min\{v_i(\XX),c_i\} \\\label{eq:first-LW}
&=\sum_{i \not\in V} \min\{v_i(\XX),c_i\} + \sum_{i \in V}\min\{v_i(\XX),c_i\}.
\end{align}
In the rest of the proof we will work on these two quantities separately, with the goal of lower-bounding them in terms of (parts of) the optimal liquid welfare. 

For every player $i\not\in V$, we have that $\min\{v_i(\XX), c_i \} = c_i \geq \min\{v_i(O_i),c_i\}$, and therefore, by summing over all such players, we obtain
\begin{align}\label{eq:first-not-V}
\sum_{i \not\in V} \min\{v_i(\XX), c_i \} \geq \sum_{i \not\in V} \min\{v_i(O_i), c_i \}. 
\end{align}

Now, we focus on players in $V$. For such a player $i$, let $\yy_i$ be the bid vector of Lemma~\ref{lem:deviation} with $S=O_i$. By the definition of $V$ and the utility definition, we have that 
$$\min\{v_i(\XX),c_i\} = v_i(\XX) = u_i(\bb) + \sum_{j \in X_i} p_j(\yy_{[j]}) \geq u_i(\bb).$$

If the total payment of player $i$ does not exceed her budget $c_i$ when she deviates to $\yy_i$, by the equilibrium condition and Lemma~\ref{lem:deviation}, we have that
\begin{align*}
\min\{v_i(\XX),c_i\} \geq u_i(\bb) &\geq u_i(\yy_i,\bb_{-i}) \\
&\geq \min\{v_i(O_i),c_i\} - \sum_{j \in O_i}\max_{\ell \neq i}b_{\ell j}.
\end{align*}
Otherwise, we obtain exactly the same inequality by the fact that $u_i(\bb) \geq 0$, and $\min\{v_i(O_i),c_i\} - \sum_{j \in O_i}\max_{\ell \neq i}b_{\ell j} < 0$. By summing this inequality over all players in $V$, we get
\begin{align}\label{eq:V1}
&\sum_{i \in V}\min\{v_i(\XX),c_i\} 
\geq \sum_{i\in V} \min\{v_i(O_i),c_i\}  - \sum_{i \in V} \sum_{j \in O_i}\max_{\ell \neq i}b_{\ell j}.
\end{align}
Further, we have 
\begin{align*}
\sum_{i \in V} \sum_{j \in O_i}\max_{\ell \neq i}b_{\ell j} &\leq \sum_{i \in [n]} \sum_{j \in O_i} \max_{\ell \in [n]} b_{\ell j} \\
&= \sum_{j \in [m]} \max_{\ell \in [n]} b_{\ell j} = \sum_{i \in [n]} \sum_{j \in X_i} b_{ij} \leq \LW(\XX),
\end{align*}
where the two equalities follow since both $\OO$ and $\XX$ define (possibly different) partitions of the set of items, while the last inequality follows by the definition of the liquid welfare, and by the conservative behavior of player $i$ at equilibrium, which requires that $\sum_{j \in X_i} b_{ij} \leq \min\{v_i(X_i), c_i\}$. By this, \eqref{eq:V1} now becomes
\begin{align}\label{eq:first-V} 
&\sum_{i \in V}\min\{v_i(\XX),c_i\}
\geq \sum_{i\in V} \min\{v_i(O_i),c_i\} - \LW(\XX).
\end{align}

By substituting \eqref{eq:first-not-V} and \eqref{eq:first-V} into \eqref{eq:first-LW}, we get
\begin{align*}
\LW(\XX) &\geq \sum_i \min\{v_i(O_i),c_i\} - \LW(\XX),
\end{align*}
which implies that
\begin{align*}
\lpoa = \frac{\LW(\OO)}{\LW(\XX)} \leq 2,
\end{align*}
and the proof is now complete.
\end{proof}

Since SFPA and SSPA are members of $\calC$, we immediately obtain the following corollary.

\begin{corollary}\label{cor:first-second}
The pure liquid price of anarchy of SFPA and SSPA with subadditive valuation functions is at most $2$.
\end{corollary}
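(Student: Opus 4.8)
The plan is to derive this directly from Theorem~\ref{thm:upper}, since that theorem already bounds the pure liquid price of anarchy of every auction in $\calC$ by $2$ under subadditive valuations. The entire task therefore reduces to verifying that both SFPA and SSPA are members of $\calC$; once membership is established, the bound is immediate and no further argument is needed.

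First I would check the defining conditions of $\calC$ for each auction. Both allocate every item independently to the player submitting the highest bid for it, and both charge a per-item payment $p_j$ that depends only on $\bb_{[j]}$, so the structural requirements are met. For SFPA the winner of item $j$ pays her own bid, i.e.\ $p_j(\bb_{[j]}) = b_{\pi(j),j}$, which lies in $[0, b_{\pi(j),j}]$, is non-decreasing in every bid, and is continuous. For SSPA the winner pays the second-highest bid, $p_j(\bb_{[j]}) = \max_{\ell \neq \pi(j)} b_{\ell j}$; since the winner holds the highest bid, this payment is at most $b_{\pi(j),j}$, and it is clearly non-negative, non-decreasing, and continuous as a function of the bids. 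Hence both payment rules satisfy all the assumptions imposed on members of $\calC$.

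Having confirmed that both SFPA and SSPA lie in $\calC$, I would simply invoke Theorem~\ref{thm:upper} to conclude that the pure liquid price of anarchy of each of these two auctions is at most $2$, which is exactly the claim.

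The main (and essentially only) obstacle is the membership verification, and even that is routine. The single point worth a moment's care is the continuity of the SSPA payment rule: it is needed so that the deviating bid vector $\yy_i$ constructed in Lemma~\ref{lem:deviation} can capture the targeted items with an infinitesimal overbid $\delta \to 0$. This holds because the second-highest-bid function is continuous in the bid vector, so the argument of that lemma applies verbatim to SSPA as a member of $\calC$.
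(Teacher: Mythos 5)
Your proof is correct and follows exactly the paper's route: the corollary is obtained by verifying that SFPA and SSPA satisfy the membership conditions of $\calC$ (as the paper does in its example in Section~\ref{sec:prelim}) and then invoking Theorem~\ref{thm:upper}. Your extra care about continuity of the second-price payment rule, needed for the $\delta \to 0$ argument in Lemma~\ref{lem:deviation}, is a sensible and accurate observation but does not change the argument.
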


\section{Lower bounds for auctions in $\calC$}\label{sec:lower}
In this section, we will present lower bounds on the price of anarchy and the price of stability of the simple combinatorial auctions we consider in this paper. 
Before we dive into the results of this section, we remark that the bound of $2$ is {\em tight} for the pure $\lpoa$ of SFPA and SSPA with subadditive valuation functions. This follows by the upper bound of Corollary~\ref{cor:first-second} and the corresponding lower bound presented by \citet{Azar2017lpoa} for the case of additive valuation functions. This extends the tight bound of \citet{Azar2017lpoa} for fractionally-subadditive functions.

Using a simplified version of the instance considered by \citet{Azar2017lpoa}, we will show with the next theorem that the lower bound of $2$ actually holds for the pure liquid price of stability, and for any simple combinatorial auction with a payment function that is a convex combination of the bids; to simplify our discussion, we refer to such auctions as {\em convex auctions}. 

\begin{theorem}\label{thm:lower-convex}
The pure liquid price of stability of any convex auction in $\calC$ with additive valuation functions is at least $2-\varepsilon$, for any arbitrarily small constant $\varepsilon >0$.
\end{theorem}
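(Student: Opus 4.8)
The plan is to exhibit, for every convex auction, a single family of instances with additive valuations whose optimal liquid welfare is arbitrarily close to twice the liquid welfare of every equilibrium; this is a streamlined version of the additive instance of \citet{Azar2017lpoa}. The instance has $m$ items and $m+1$ players: one \emph{dominant} player $D$ with the additive valuation $v_D(j)=1$ for every item $j$ and budget $c_D=m\varepsilon(1+\gamma)$ for a small $\gamma>0$, together with $m$ \emph{small} players $S_1,\dots,S_m$, where $S_j$ values only item $j$, at $v_{S_j}(j)=\varepsilon$, and has budget at least $\varepsilon$. The parameters are chosen so that $\varepsilon(1+\gamma)<1$; at the end I would let $\gamma\to 0$ with $m$ large.

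First I would pin down the optimal liquid welfare. Since $D$ is budget-capped ($v_D([m])=m>c_D$) while each small player contributes at most $\varepsilon$ from its single item, the best allocation assigns to $D$ just enough items to saturate its budget and gives every remaining item to its small owner, yielding $\OPT = c_D(1-\varepsilon)+\varepsilon m$. I would verify optimality by a one-line exchange argument: writing the liquid welfare of the allocation that gives $D$ a set $T$ as $\min\{|T|,c_D\}+\varepsilon(m-|T|)$, this is increasing in $|T|$ up to $|T|=c_D$ and decreasing afterwards, so $|T|\approx c_D$ is optimal.

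The heart of the proof is to show that the allocation in which $D$ wins everything is the outcome of \emph{every} equilibrium, so that the best and the worst equilibrium coincide. Here I would use the conservative-bidding assumption decisively: if $S_j$ wins item $j$, then $b_{S_j,j}\le\min\{v_{S_j}(j),c_{S_j}\}=\varepsilon$, so a small player can never bid above $\varepsilon$ on an item it actually wins. Because the payment to the winner of any item is at most her own bid (for a convex auction it is a convex combination of the submitted bids), $D$ can bid slightly above $\varepsilon$ on every item, a total still within $c_D$, thereby strictly outbidding every small player and capturing all items; and since $D$ has value $1>\varepsilon$ per item with strictly positive margin, this is $D$'s best response. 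I would then argue that no equilibrium can leave a small player holding an item: such an item is won at a conservative bid at most $\varepsilon$, so $D$ has a profitable deviation that re-grabs it (value $1$, payment at most about $\varepsilon$, comfortably within the remaining budget). Hence every equilibrium allocates all items to $D$, giving $\LW(\XX)=\min\{m,c_D\}=c_D$.

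Combining the two computations, the ratio is $\OPT/\LW = (1-\varepsilon)+\tfrac{1}{1+\gamma}\to 2-\varepsilon$ as $\gamma\to0$, so $\lpos\ge 2-\varepsilon$, which proves the claim. The step I expect to be the main obstacle is the uniqueness argument above: it must rule out \emph{all} high-welfare equilibria (this is exactly what separates the price of stability from the price of anarchy) and must do so uniformly over every convex payment rule. The only features it relies on are that the winner's payment never exceeds her bid, which lets $D$ afford to monopolize, and that conservative bidding caps the small players out of contention; both hold for every convex auction, so the same instance works simultaneously throughout the class.
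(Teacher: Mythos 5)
Your instance and the final ratio computation are fine, but the step you yourself flag as the main obstacle --- that \emph{every} equilibrium gives all items to $D$ --- is where the argument has a genuine gap, and the paper avoids it entirely by using a much smaller construction. Two specific problems. First, ``capturing all items at total payment $\le c_D$ is $D$'s best response'' is not true: winning $m-1$ items at near-zero payment (which conservative bidding by the small players permits, e.g.\ in SSPA when losers bid $0$) yields utility close to $m-1$, which exceeds $m-c_D=m(1-\varepsilon(1+\gamma))$ whenever $c_D>1$, i.e.\ for any fixed $\varepsilon$ once $m>1/\varepsilon$. So the ``grab everything'' deviation does not by itself rule out equilibria in which $D$ wins only $m(1-\varepsilon)$ or so items; and since an equilibrium with $|X_D|\approx c_D$ would have liquid welfare close to $\OPT$, you really do need $|X_D|=m$ exactly, not just ``most items.'' Second, the marginal ``re-grab item $j$'' deviation is not obviously feasible: conservativeness caps the \emph{sum of $D$'s bids} at $c_D$, and at equilibrium (e.g.\ under a second-price-type rule) $D$ may already be bidding up to that cap on the items she wins, so simply adding a bid of $\varepsilon+\delta$ on item $j$ violates conservativeness; likewise ``comfortably within the remaining budget'' is unjustified when $P$ is close to $c_D$. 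The gap is repairable --- have $D$ deviate to bidding $\max_{\ell\neq D}b_{\ell j}+\delta$ on \emph{every} item, check that this is conservative ($m(\varepsilon+\delta)\le c_D$), and use monotonicity of the payment rule to argue her payment on each item she already wins changes by at most $\delta$ while she gains $1-\varepsilon-\delta$ on each new item --- but as written the uniqueness claim does not go through uniformly over convex payment rules.

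For comparison, the paper's proof uses a two-player, two-item instance ($v_1=(1,1)$, $v_2=(0,1)$, $c_1=1$, $c_2=1-\varepsilon$) in which player $1$'s utility from winning both items strictly exceeds the \emph{maximum conceivable} utility from any other allocation (at most $1$, since she would get at most one item), so uniqueness of the equilibrium allocation follows in one line without any marginal-deviation or budget-slack analysis. If you want to keep your many-player instance, you must supply the restructured-deviation argument above; otherwise the two-item instance is the cleaner route to the same $2-\varepsilon$ bound.
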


\begin{proof}
Consider an instance with two players and two items. Both players have additive valuation functions so that their values for the individual items are $v_1(1)=1$, $v_1(2)=1$, $v_2(1)=0$, and $v_2(2)=1$, while their budgets are $c_1=1$ and $c_2=1-\varepsilon$.

Since $v_2(1)=0$, player $2$ has no incentive to bid more than $0$ for item $1$. Hence, her main objective is to acquire item $2$, by bidding $b_{22} \in [0, 1-\varepsilon]$. Now, let $\gamma$ and $\delta$ be two arbitrarily small but strictly positive constants such that $\gamma + \delta < \varepsilon$. Player $1$ can acquire item $1$ almost for free by bidding $b_{11}=\gamma$, and item $2$ by bidding $b_{12} = b_{22} + \delta$. This is a valid strategy for player $1$ since $b_{11} + b_{12} = \gamma + b_{22} + \delta < \gamma + 1 -\varepsilon + \delta < 1$, which is at most her value for both items $v_1(1)+v_1(2)=2$, and at most her budget $c_1=1$.
Since the payment of player $1$ is at most $b_{11} + b_{12} < 1$ for any simple combinatorial auction with a payment function that is a convex combination of the bids of the players, her utility is at least $2-b_{11}-b_{12} > 1$. This is the maximum possible utility that player $1$ can obtain since in any other allocation she gets at most one item (and thus her utility is at most $1$). 

Consequently, $\XX = (\{1,2\},\varnothing)$ is the only possible equilibrium allocation, with 
$$\LW(\XX) = \min\{v_1(\{1,2\}),c_1\} + \min\{v_2(\varnothing),c_2\}  = 1.$$
On the other hand, the optimal allocation is $\OO=(\{1\},\{2\})$ with 
$$\LW(\OO) = \min\{v_1(1),c_1\} + \min\{v_2(2),c_2\}= 2-\varepsilon,$$ 
and the lower bound on the $\lpos$ follows. 
\end{proof}

By Theorem~\ref{thm:upper}, and since additive valuation functions form a subclass of subadditive valuation functions, we also obtain the following corollary.

\begin{corollary}\label{cor:convex}
The pure price of stability of any convex auction in $\calC$ with subadditive valuation functions is at most $2$, and this bound is tight.
\end{corollary}

We now consider the whole class of simple combinatorial auctions and show a weaker lower bound on the liquid price of anarchy, which depends on both the number of players and the number of items. The main idea of the proof is to construct two instances with different private information (valuation functions and budgets) for the players, which are such that the strategic behavior of the players leads to the same equilibrium in both instances. Consequently, the auction cannot tell the two instances apart, and fails to learn the private information of the players. 

\begin{theorem}\label{thm:lower-general}
The pure liquid price of anarchy of any auction in $\calC$ with subadditive valuation functions is at least $2-\frac{n-1}{m}-\frac{1}{n}$.
\end{theorem}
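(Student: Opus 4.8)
The plan is to build the lower-bound instance directly from the indistinguishability idea flagged in the theorem's preamble: since an auction in $\calC$ observes only the bid matrix and its payment rule depends solely on the bids, it cannot access the private budgets or valuations. I would construct two instances $I_1$ and $I_2$ over the same $n$ players and $m$ items, with additive valuations (so that, as noted in the preliminaries, the bound transfers to the subadditive class), which share a single conservative bid profile $\bb$ that is a pure equilibrium in both. Because the auction's allocation and payments are functions of $\bb$ alone, it produces the identical outcome $\XX(\bb)$ in both instances, even though their optimal allocations, and the budget caps entering the liquid welfare, differ.

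For the construction I would single out one ``heavy'' player who at $\bb$ wins a large wasteful bundle, with value well above her budget so that she contributes only her budget, together with $n-1$ ``light'' players who value the items just enough that the optimum can absorb the value the heavy player wastes, but whose conservative bids are capped below the heavy player's bids so that none of them can profitably grab an item at $\bb$. The deterrence is the crux: by the conservative assumption a light player's bid on any single item cannot exceed her value for it, so if the heavy player bids strictly above every light player's per-item value (affordable precisely because the total of those values is tuned against her budget), no light player can win, and this holds for every payment rule in $\calC$ at once. The point of using two instances is to reallocate budgets in the range that is invisible to the bids, namely above the equilibrium bid totals, so that the same $\bb$ remains an equilibrium while the optimal liquid welfare is pushed up to nearly twice the equilibrium value.

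I would then compute the two welfares explicitly. The equilibrium contributes the heavy player's budget from her wasteful bundle plus the capped contributions of the light players, while the optimum splits the heavy player's items so that her budget is used efficiently and the $n-1$ light players each realize their share of the freed value. In taking the ratio, the $-\frac 1n$ term should arise from the single player's worth of value the optimum must still leave with the heavy player, and the $-\frac{n-1}{m}$ term from the integrality loss of distributing $m$ items among the $n-1$ light players against their budget caps; the parameters would be arranged so that these two losses combine to exactly $2-\frac{n-1}{m}-\frac 1n$.

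The step I expect to be the main obstacle is certifying that a single profile $\bb$ is a genuine conservative equilibrium in both instances simultaneously, uniformly over all payment rules in $\calC$. I must verify at once that the heavy player stays within budget (her payments are bounded by her bids, which sum below her budget), that no light player has a profitable conservative deviation — including multi-item deviations, where I would combine additivity of the value with the per-item conservative cap to rule out winning any bundle at all — and that the heavy player herself does not prefer to shed items. Arranging the valuations and budgets to respect all of these constraints while still forcing the optimum to roughly twice the equilibrium, and while keeping the two instances bid-identical, is where the delicate parameter tuning lies; the closing ratio computation is then routine.
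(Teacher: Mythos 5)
Your high-level plan shares the paper's skeleton---two instances with additive valuations, a common bid profile that is an equilibrium in both, and budget caps placed so that the optimum nearly doubles the equilibrium welfare---but it is missing the one idea that makes the argument go through \emph{for every payment rule in $\calC$ at once}, and the step you yourself flag as ``the main obstacle'' is exactly where your version breaks. You propose to write down an explicit ``heavy vs.\ light'' bid profile and then certify it is a conservative equilibrium uniformly over all payment rules. This is not just delicate, it is essentially unworkable: which bids form an equilibrium depends on the payment rule (under a first-price-like rule the heavy player profitably lowers any bid that strictly exceeds the second-highest; under a second-price-like rule the deterred light players' bids determine her payment; arbitrary tie-breaking interferes with bidding ``just above'' a threshold). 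No single hand-built profile will survive all rules simultaneously.

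The paper circumvents this entirely. It first considers a \emph{symmetric} instance in which every player has the same additive valuation $v$ and infinite budget, and simply takes \emph{whatever} pure equilibrium $\bb$ that game has, with allocation $\XX$; no verification is needed because the profile is not constructed. It then builds the second instance by giving each player $i\neq\ell$ (where $\ell$ minimizes $v(X_i)\le V/n$) the valuation $\tilde v_i(S)=v(S)+v(X_i)$ for $S\neq\varnothing$ and budget $\tilde c_i=v(X_i)$. Since the modification adds only a constant to each player's utility, $\bb$ remains an equilibrium, and the budget caps force $\LW(\XX)=V$ while the allocation giving $\ell$ the $m-n+1$ most valuable items yields $\LW(\OO)\ge(2-\tfrac{n-1}{m}-\tfrac1n)V$. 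Note also that the roles are assigned differently from your sketch: at equilibrium every player contributes her value $v(X_i)$ (the capped players have budget \emph{equal} to their equilibrium value, not far below it), and the gain in the optimum comes from $\ell$ absorbing almost all of $V$ while the others still contribute their budgets. Your accounting (a heavy player whose value far exceeds her budget, light players contributing nothing at equilibrium) would have to be re-derived from scratch to hit the exact $2-\tfrac{n-1}{m}-\tfrac1n$ constant, and without the symmetric-instance device the equilibrium certification remains an unfilled gap.
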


\begin{proof}
Consider some arbitrary auction in $\calC$, and an instance with $n$ players and $m \geq \lambda n$ items, for some $\lambda \geq 2$. Let $v: 2^m \rightarrow \mathbb{R}_{\geq 0}$ be any additive (and, thus, subadditive) function over the powerset of items such that $v(S)=\sum_{j \in S}v(j)$ for every $S \subseteq [m]$, and also let $V = \sum_{j \in [m]} v(j)$. Every player $i \in [n]$ has a valuation function $v_i = v$ and budget $c_i=+\infty$. Let $\bb$ be an equilibrium bid matrix of the induced game, which leads to an allocation $\XX$ according to which player $i$ gets a set of items $X_i \subseteq [m]$. Since the game is symmetric (all players have the same valuation function and budget), any allocation has the same liquid welfare, and hence the liquid price of anarchy of this game is equal to $1$. 

Next, we consider a second instance with the same set of players and items, but with modified valuation functions and budgets. Let $\ell = \arg\min_i v(X_i)$ be the player that gets the least value at the equilibrium $\bb$ of the previous instance; since $v$ is an additive function and $V = \sum_{j \in [m]} v(j)$, it must be $v(X_\ell) \leq V/n$.
In the new instance, player $\ell$ has the same valuation function $v_\ell=v$ and budget $c_\ell$ as before. In contrast, every other player $i \neq \ell$ has a modified valuation function $\tilde{v}_i$ such that $\tilde{v}_i(\varnothing)=0$ and $\tilde{v}_i(S) = v(S)+v(X_i)$ for every $S \neq \varnothing$, and a modified budget $\tilde{c}_i=v(X_i)$. Clearly, the valuation function $\tilde{v}_i$ is subadditive: For any two non-empty sets of items $S$ and $T$, by the definition of $\tilde{v}$, and the properties of $v$ (subadditivity and monotonicity), we have that 
\begin{align*}
\tilde{v}_i(S \cup T) &= v(S \cup T) + v(X_i) \leq v(S) + v(T) + v(X_i) \leq \tilde{v}_i(S) + \tilde{v}_i(T).
\end{align*}
Observe that for any bid matrix $\yy$, the utility of player $\ell$ in the second instance is $\tilde{u}_\ell(\yy)=u_\ell(\yy)$ and the utility of any other player $i\neq \ell$ is $\tilde{u}_i(\yy) = u_i(\yy) + v(X_i)$. Hence, since the term $v(X_i)$ does not affect the way that player $i$ selects her strategy (it is viewed as a constant), $\bb$ must be an equilibrium bid matrix for the second instance as well, which leads to the same allocation $\XX$, according to which each player $i$ gets the set $X_i$; observe that the payment of player $i \neq \ell$ is within her budget $\tilde{c}_i = v(X_i)$ since she has non-negative utility at the equilibrium of the first instance, meaning that $v(X_i) - \sum_{j \in X_i} p_j(\bb_{[j]}) \geq 0$.
Consequently, the liquid welfare at equilibrium is
\begin{align*}
\LW(\XX) &= \min\{v_\ell(X_\ell),c_\ell\} + \sum_{i \neq \ell} \min\{\tilde{v}_i(X_i),\tilde{c}_i\} \\
&= \min\{v(X_\ell),c_\ell\} + \sum_{i \neq \ell} \min\{2v(X_i),v(X_i)\} \\
&= \sum_{i \in [n]} v(X_i) = V.
\end{align*}
On the other hand, consider the allocation $\OO$ according to which player $\ell$ gets the $m-n+1$ most valuable items according to the function $v$, and every player $i \neq \ell$ is given one of the remaining items; note that, since $m \geq \lambda n$, there are enough items to define such an allocation. 
The liquid welfare of $\OO$ is 
\begin{align}\nonumber
\LW(\OO) &= \min\{v_\ell(O_\ell),c_\ell\} + \sum_{i \neq \ell} \min\{\tilde{v}_i(O_i),\tilde{c}_i\} \\ \nonumber
&= \min\{v(O_\ell),+\infty\} + \sum_{i \neq \ell} \min\{v(O_i)+v(X_i),v(X_i)\} \\ \label{eq:OPT-lower}
&= v(O_\ell) + \sum_{i \neq \ell} v(X_i).
\end{align}
Since $v(O_\ell) \geq (m-n+1) \cdot v(j)$ for each $j \in [m]\setminus O_\ell$ and $|[m]\setminus O_\ell|=n-1$, we have that $v([m]\setminus O_\ell) \leq \frac{n-1}{m-n+1} \cdot v(O_\ell)$. By this, we obtain 
\begin{align*}
&V = v(O_\ell) + v([m]\setminus O_\ell) \leq \left( 1 + \frac{n-1}{m-n+1}\right) v(O_\ell) \\
&\Leftrightarrow  v(O_\ell) \geq \left(1 - \frac{n-1}{m}\right) \cdot V.
\end{align*}
Moreover, by the definition of $V$ and since $v(X_\ell) \leq V/n$, we have
$$\sum_{i \neq \ell} v(X_i) = V - v(X_\ell) \geq \left(1 - \frac{1}{n}\right) V.$$ 
Substituting the last two expressions into \eqref{eq:OPT-lower}, we get
\begin{align*}
\LW(\OO) &\geq v(O_\ell) + \sum_{i \neq \ell} v(X_i) \geq \left(2 - \frac{n-1}{m} - \frac{1}{n} \right) V. 
\end{align*}
Therefore, the liquid price of anarchy is at least
\begin{align*}
\frac{\LW(\OO)}{\LW(\XX)} \geq 2 - \frac{n-1}{m} -\frac{1}{n}.
\end{align*}
as desired.
\end{proof}

By considering instances in which the number of items is large (tends to infinity), we recover the bound of $2-1/n$ proved by \citet{CV18} for single divisible resource allocation mechanisms. Essentially, we can interpret Theorem~\ref{thm:lower-general} as a discrete version of the corresponding theorem of \citet{CV18}, which shows how the bound depends not only on the number of players, but also on the number of items. Consequently, it leaves room for improvement for special cases where the number of items is bounded.

\section{More complex auctions}\label{sec:vcg}
So far, we have focused on auctions that greedily allocate the items, by simply looking at who submits the highest bid for every individual item, and showed that no such auction can achieve full efficiency. Naturally, one might wonder whether taking into account the bids over all items while coming up with the allocation can improve the liquid price of anarchy and stability. In this section, we answer this negatively for the well-known VCG auction~\citep{V61,C71,G73}.  

Let us first give a brief description of the auction. Given as input a matrix $\bb = (b_i(S))_{i \in [n], S \in 2^{[m]}}$ that specifies the bid of every player $i$ for every possible bundle of items $S$, VCG computes an allocation $\XX$ that maximizes the social welfare according to $\bb$: 
$$\XX \in \arg\max_\YY \sum_{i\in [n]} b_i(\YY).$$
Then, the payment $p_i(\bb)$ of player $i$ who is allocated bundle $X_i$ is the difference between the maximum possible social welfare that could have been achieved if $i$ did not participate and the social welfare of all players besides $i$ for $\XX$, always according to the given bids:
$$p_i(\bb) = \max_\YY \sum_{\ell \neq i} b_\ell(\YY) - \sum_{\ell \neq i} b_\ell(\XX).$$

In the no-budget setting, VCG is known to be truthful (in the sense that it is a dominant strategy for every player to bid her true value for every possible bundle of items) and also achieves full efficiency. However, when the players have budget constraints, VCG is no longer truthful~\citep{Dobzinski2014liquid}, and as we will show with the next theorem, it is not fully efficient either. 

\begin{theorem}
The pure liquid price of stability of VCG with additive valuation functions is at least $2-\varepsilon$, for any arbitrarily small constant $\varepsilon >0$.
\end{theorem}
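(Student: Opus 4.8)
The plan is to reuse essentially the same two-player, two-item instance as in the proof of Theorem~\ref{thm:lower-convex}: additive values $v_1(1)=v_1(2)=1$, $v_2(1)=0$, $v_2(2)=1$, and budgets $c_1=1$, $c_2=1-\varepsilon$. As before, the optimal allocation is $\OO=(\{1\},\{2\})$ with $\LW(\OO)=\min\{1,1\}+\min\{1,1-\varepsilon\}=2-\varepsilon$. Since we are bounding the $\lpos$, it is not enough to exhibit one bad equilibrium; I would instead show that \emph{every} equilibrium of the VCG game yields the allocation $(\{1,2\},\varnothing)$, whose liquid welfare is exactly $\min\{2,1\}+0=1$, which then gives $\lpos\geq(2-\varepsilon)/1=2-\varepsilon$.

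The heart of the argument is a valid deviation for player $1$ that grabs both items cheaply. Because player $2$ is conservative and her budget is $1-\varepsilon$, her bid on any bundle containing item $2$ is at most $\min\{v_2(2),c_2\}=1-\varepsilon$, while her bid on item $1$ alone is $0$. If player $1$ bids $1$ on the bundle $\{1,2\}$ and $0$ on the singletons, then the welfare-maximizing allocation under VCG awards her both items, and her Clarke payment $p_1=\max_\YY\sum_{\ell\neq 1}b_\ell(\YY)-\sum_{\ell\neq 1}b_\ell(\XX)$ reduces to player $2$'s best alternative bid, which is at most $1-\varepsilon<c_1$. Hence the deviation is within budget and yields utility $v_1(\{1,2\})-p_1\geq 2-(1-\varepsilon)=1+\varepsilon>1$.

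Next I would turn this into a statement about all equilibria. In any allocation in which player $1$ receives at most one item, her additive value, and therefore her utility, is at most $1$; since the deviation above guarantees her strictly more than $1$ against \emph{every} conservative strategy profile of player $2$, player $1$ must receive both items in every equilibrium. This pins the equilibrium allocation down to $(\{1,2\},\varnothing)$. I would also confirm that such an equilibrium exists by checking that player $2$ has no profitable deviation: to win item $2$ she would need the allocation giving it to her to be welfare-maximizing against player $1$'s bundle bid of $1$, which requires a bid of at least $1$ on item $2$, impossible under her conservative cap of $1-\varepsilon$. Substituting into the definitions then yields the claimed bound on the $\lpos$.

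The main obstacle, relative to the convex-auction argument, is handling the Clarke-pivot payment correctly. Unlike a per-item convex combination of bids, the VCG payment is defined through an alternative-welfare maximization over \emph{bundles}, so I must verify two things against all relevant alternative allocations rather than against a single second-price comparison: that player $1$'s payment for $\{1,2\}$ stays strictly below $c_1$ for every conservative bid of player $2$, and that player $2$ can never undercut the allocation by winning item $2$ within her own budget. Both ultimately reduce to the bound $b_2(\{2\})\leq 1-\varepsilon$ supplied by conservativeness, but the bundle-bidding structure of VCG means the interplay between player $1$'s bundle bid and player $2$'s budget must be tracked carefully to rule out every competing allocation.
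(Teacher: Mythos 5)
Your proposal is correct and follows essentially the same route as the paper: the same two-player, two-item instance (the paper perturbs $v_1(2)$ to $1-\alpha$ only to make an aside about VCG's non-truthfulness unique, which is not needed for the equilibrium argument), the same bundle-bid deviation by player $1$ showing she secures both items for a payment of at most $1-\varepsilon$ and hence utility exceeding $1$, and the same conclusion that every equilibrium allocation is $(\{1,2\},\varnothing)$. Your explicit check that player $2$ cannot profitably deviate (so that an equilibrium actually exists) is a point the paper leaves implicit, but it is a refinement rather than a different approach.
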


\begin{proof}
Let $\alpha < \varepsilon$ be a parameter, and consider an instance with two players and two items. The valuation functions of the players are additive so that their values for the individual items are $v_1(1)=1$, $v_1(2)=1-\alpha$, $v_2(1)=0$, and $v_2(2)=1$, while their budgets are $c_1=1$ and $c_2=1-\varepsilon$.

Before we argue about the liquid price of stability, observe that if both players were truthful, then player $1$ would get item $1$ and player $2$ would get item $2$ as this is the allocation that maximizes the social welfare. However, the payment of player $2$ would be equal to $1-\alpha$, which is the difference between the value of player $1$ for both items and her value for only item $1$. Since $\alpha < \varepsilon$, this means that player $2$ would need to pay an amount that exceeds her budget $c_2=1-\varepsilon$. Hence, she would get utility $-\infty$ and would want to deviate to a smaller bid for item $2$.

Similarly to the proof of Theorem~\ref{thm:lower-convex}, we can argue that the only possible equilibrium allocation is $\XX = (\{1,2\},\varnothing)$. To see this, observe that since the value of player $2$ is $0$ for item $1$, she will bid $b_2 \in [0,1-\varepsilon]$ for the singleton set $\{2\}$ and the set $\{1,2\}$, and $0$ for the singleton set $\{1\}$. Now, player $1$ can bid $0$ for both singleton sets $\{1\}$ and $\{2\}$, and $b_1 = 1 > b_2$ for $\{1,2\}$. Consequently, the social welfare is maximized by allocating both items to player $1$, who has to pay $b_2$. Since the utility of player $1$ for any other allocation is at most $1$ (she gets at most one item), and her utility now is $2-\alpha - b_2 \geq 1+\varepsilon-\alpha > 1$, player $1$ maximizes her utility, $\XX$ is the only possible equilibrium allocation, and the bound on the $\lpos$ follows. 
\end{proof}

\section{Known budgets}\label{sec:known}
In this section, we no longer assume that budgets are private. Instead, we consider the case where the budgets are common knowledge, and the auction can take them into account while defining the payments of the winners. Still, even with this extra power, no simple combinatorial auction can achieve optimal liquid welfare. 
The proof of the following theorem is similar to that of Theorem~\ref{thm:lower-general}, but uses instances with two players and the same budgets, since these are now assumed to be known. Inevitably, this leads to a somewhat weaker bound. Similarly to Theorem~\ref{thm:lower-general}, the bound can be interpreted as a discrete version of the $4/3$ bound of \citet{CV18} for the class of the so-called budget-aware single divisible resource allocation mechanisms. 

\begin{theorem}\label{thm:lower-known}
The pure liquid price of anarchy of any auction in $\calC$ with subadditive valuation functions is at least $\frac{4}{3} - \frac{2}{3m}$, even when the budgets are known.
\end{theorem}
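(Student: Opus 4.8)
The plan is to mirror the two-instance construction behind Theorem~\ref{thm:lower-general}, specialized to $n=2$ and adapted so that the two instances share the \emph{same} budgets---which is forced once the budgets are common knowledge and may enter the payment rule. I would take $m$ items with the additive base valuation $v(j)=1$ (so $V=m$) and give both players the common budget $c=m$ in \emph{both} instances. In Instance~1 both players have valuation $v$; I fix any equilibrium $\bb$ of the induced game, with allocation $(X_1,X_2)$, and relabel the players so that $v(X_2)=\min\{v(X_1),v(X_2)\}\le V/2=m/2$. Since then $v(X_1)\ge m/2$, we have $X_1\neq\varnothing$.

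In Instance~2 I keep player~$2$'s valuation equal to $v$ and replace player~$1$'s by the subadditive function $w_1$ with $w_1(\varnothing)=0$ and $w_1(S)=v(S)+c$ for $S\neq\varnothing$; the budgets remain $c=m$. Because $w_1\ge c$ on every nonempty bundle, player~$1$ is budget-capped, so at $\bb$ she contributes exactly $c$ while player~$2$ contributes $\min\{v(X_2),c\}=v(X_2)$, giving, in Instance~2, $\LW(\XX(\bb))=c+v(X_2)\le c+m/2=\tfrac{3m}{2}$. For the optimum I assign one item to player~$1$ (still capped, contributing $c$) and the other $m-1$ items to player~$2$ (contributing $\min\{m-1,c\}=m-1$), so the optimal liquid welfare is $c+(m-1)=2m-1$. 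Hence
\[
\frac{\max_{\YY}\LW(\YY)}{\LW(\XX(\bb))}\ \ge\ \frac{2m-1}{3m/2}\ =\ \frac{4}{3}-\frac{2}{3m},
\]
and since the minimum over equilibria is at most $\LW(\XX(\bb))$, this lower-bounds the liquid price of anarchy of every auction for which $\bb$ is an equilibrium of Instance~2. The upper estimate $v(X_2)\le m/2$ plays here the role of the ``min-player'' inequality $v(X_\ell)\le V/n$ from Theorem~\ref{thm:lower-general}.

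It therefore remains to show that $\bb$ \emph{is} an equilibrium of Instance~2 for every admissible (now possibly budget-aware) payment rule, and this is the step I expect to be the main obstacle. Player~$2$'s valuation is unchanged and, as the budgets---and hence the auction's allocation and payment maps---coincide across the two instances, player~$2$ has no new profitable deviation; moreover player~$1$'s utility in Instance~2 equals her Instance~1 utility plus the constant $c$ on every nonempty outcome, so her ranking of outcomes is preserved. The difficulty is that the additive part of the boost raises player~$1$'s \emph{marginal} value for each extra item to $v(j)=1$: under a rule that charges strictly less than $1$ on a contested item she could profitably \emph{steal} an item of $X_2$ in Instance~2---a deviation that the tighter conservative constraint of Instance~1 forbade. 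In Theorem~\ref{thm:lower-general} this cannot occur because the boosted player's budget was set equal to $v(X_1)$, capping her \emph{total} bids so that winning any proper superset of $X_1$ is infeasible; here the budget is fixed and shared, so the hard part is to engineer Instance~1---for instance by arranging that player~$2$ already bids its full value on the items of $X_1$, so that conservative bidding makes the fixed budget $c$ just exhaust player~$1$'s bidding power on $X_1$---in such a way that this steal is ruled out \emph{uniformly} over all payment rules, perhaps after an infinitesimal perturbation of the values to avoid boundary ties.
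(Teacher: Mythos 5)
Your construction is the paper's construction: two players with identical additive valuations and common budget $V$, an equilibrium $\bb$ of the symmetric instance, and a second instance in which the player holding the \emph{larger} equilibrium share has her valuation shifted by the constant $V$ (your $c$) on every nonempty bundle while the budgets stay fixed. The equilibrium liquid welfare $V+\min\{v(X_1),v(X_2)\}\le 3V/2$ against the optimum $V+(1-1/m)V$ gives exactly $4/3-2/(3m)$, so the arithmetic, the choice of which player to boost, and the choice of optimal allocation all match.

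The step you flag as the main obstacle is the one the paper settles in a single sentence, and your worry dissolves under the paper's definitions. An equilibrium there requires $u_i(\bb)\ge u_i(\yy,\bb_{-i})$ for \emph{every} bid vector $\yy$; conservativeness is an assumption on the profile the players actually submit, not a restriction on the deviations quantified over in the equilibrium condition. Hence the set of candidate deviations for the boosted player is literally the same in both instances. Since the budgets and the payment rule are unchanged, the $-\infty$ (budget-violating) outcomes coincide; on every outcome in which the boosted player wins a nonempty bundle her Instance-2 utility equals her Instance-1 utility plus the constant $c$ (her marginal values for extra items are in fact unchanged, contrary to your remark --- what moves is only the bundle cap $\min\{w_1(S),c\}$); and outcomes where she wins nothing give utility $0\le u_1(\bb)+c$. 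So no new profitable deviation can appear, $\bb$ remains an equilibrium, and there is no need to re-engineer Instance~1. Your ``steal'' scenario is a genuine threat only in a model where deviations must themselves be conservative with respect to the deviator's valuation: there the boost relaxes the per-bundle bid cap from $v(S)$ to $c$, and (for example) a zero-payment rule would admit profitable overbidding deviations in Instance~2 that Instance~1 forbade. That is not the model the paper adopts, so the constant-shift observation closes the proof without the extra perturbation machinery you sketch.
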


\begin{proof}
Consider some arbitrary auction in $\calC$, and an instance with two players and $m \geq 2$ items. Let $v: 2^m \rightarrow \mathbb{R}_{\geq 0}$ be any additive (and, thus, subadditive) function over the powerset of items such that $v(S)=\sum_{j \in S}v(j)$ for every $S \subseteq [m]$, and also let $V = \sum_{j \in [m]} v(j)$. The valuation functions of the two players are $v_1 = v_2= v$ and their budgets are $c_1=c_2=V$. Let $\bb$ be an equilibrium bid matrix of the induced game, which leads to an allocation $\XX$ according to which player $1$ gets set $X_1$ and player $2$ gets set $X_2$. Without loss of generality, assume that $v(X_1) \leq v(X_2)$, and hence $v(X_1) \leq V/2$. 
Since the game is symmetric (all players have the same valuation function and budget), all allocations have the same liquid welfare, and hence the liquid price of anarchy of this game is equal to $1$.

Next, we consider a second instance with the same set of players, items, and budgets, but with modified valuation functions. In the new instance, player $1$ has the same valuation function $v_1=v$ as in the previous instance, while player $2$ has a modified valuation function $\tilde{v}_2$ such that $\tilde{v}_2(\varnothing)=0$ and $\tilde{v}_2(S) = v(S)+V$ for every $S \neq \varnothing$; it should be obvious that $\tilde{v}_2$ is subadditive.

Since $V$ is only a constant term in the modified valuation function of player $2$, $\bb$ must be an equilibrium bid matrix for the second instance as well. This leads to the same allocation $\XX$, according to which player $1$ gets set $X_1$ and player $2$ gets set $X_2$; the payments of both players are within their budgets since this is true in the first instance and their budgets did not change. 
The liquid welfare at equilibrium is
\begin{align*}
\LW(\XX) &= \min\{v_1(X_1),c_1\} + \min\{\tilde{v}_2(X_2),c_2\} \\
&= \min\{v(X_1),V\} + \min\{v(X_2) + V,V\} \\
&= v(X_1) + V \\
&\leq \frac{3V}{2}.
\end{align*}
On the other hand, consider the allocation $\OO$ according to which player $1$ gets the $m-1$ most valuable items according to the function $v$, and player $2$ gets the remaining item. We clearly have that $v(O_2) \leq \frac{1}{m-1}v(O_1)$ and 
\begin{align*}
&V = v(O_1) + v(O_2) \leq \left(1 + \frac{1}{m-1}\right)v(O_1) \\
&\Leftrightarrow v(O_1) \geq \left(1 - \frac{1}{m}\right) V. 
\end{align*} 
The liquid welfare of $\OO$ is 
\begin{align*}
\LW(\OO) &= \min\{v_1(O_1),c_1\} + \min\{\tilde{v}_2(O_2),c_2\} \\ 
&= \min\{v(O_1),V\} + \min\{v(O_2)+V,V\} \\
&= v(O_1) + V \\
&\geq \left(2 - \frac{1}{m}\right) V.
\end{align*}
Therefore, the liquid price of anarchy is at least
\begin{align*}
\frac{\LW(\OO)}{\LW(\XX)} \geq \frac{4}{3} - \frac{2}{3m},
\end{align*}
as desired. 
\end{proof}

\section{Conclusions and possible extensions}\label{sec:open}
In this paper, we studied the efficiency of a class of simple combinatorial auctions, which allocate each item separately to strategic players with subadditive valuation functions and budgets. 
We showed tight bounds on the $\lpoa$ and $\lpos$ for convex simple auctions, and complemented it with (weaker) lower bounds for every simple auction, for the more complex VCG auction, and also for simple auctions that may have access to the budgets of the players. Even though we painted an almost complete picture of the $\lpoa$ and $\lpos$ landscape for these auctions, there are still many interesting directions to be explored. 

In terms of our results, the lower bound of $2-1/n-(n-1)/m$ for private budgets (Theorem~\ref{thm:lower-general}) and the lower bound of $4/3 - 2/(3m)$ for known budgets (Theorem~\ref{thm:lower-known}), leave open the possibility of simple combinatorial auctions with efficiency guarantees that are strictly better than the upper bound of $2$ which we proved in Theorem~\ref{thm:upper} for special cases, depending on the number of players or items. In particular, is there a better simple auction for the fundamental case where there are only two players? What about a constant number of players and known budgets?

Furthermore, in this paper we focused exclusively on the case of pure equilibria. \citet{Azar2017lpoa} did focus on more general equilibrium concepts, but were only able to prove non-tight constant bounds on the liquid price of anarchy for SFPA and SSPA with agents that have additive valuation functions. Consequently, an important avenue for future research is to consider more general equilibrium concepts and valuation functions, and bound the liquid price of anarchy and stability for a broader class of simple auctions. More concretely, what happens in the case of convex auctions with (fractionally-)subadditive agents and Bayes-Nash equilibria? 

\bibliographystyle{named}
\bibliography{bib-lpoa}

\end{document}